\def\BibTeX{{\rm B\kern-.05em{\sc i\kern-.025em b}\kern-.08em
    T\kern-.1667em\lower.7ex\hbox{E}\kern-.125emX}}
\newtheorem{theorem}{Theorem}
\newtheorem{definition}{Definition}
\newtheorem{remark}{Remark}
\newtheorem{optimization}{Optimization Problem}
\def\BibTeX{{\rm B\kern-.05em{\sc i\kern-.025em b}\kern-.08em
    T\kern-.1667em\lower.7ex\hbox{E}\kern-.125emX}}
\newcommand{\AvGas}{\bm{g}_{av}}
\newcommand{\PeakGas}{\bm{g}_{peak}}
\newcommand{\PowAl}{\bm{g}}
\newcommand{\AvDemand}{\bm{d}_{av}}
\newcommand{\AvLoss}{\bm{l}_{av}}
\newcommand{\AvWind}{\bm{w}_{av}}
\newcommand{\AvExDem}{\bm{r}_{av}}
\newcommand{\relu}[1]{\left[#1\right]^+}
\let\old@ps@headings\ps@headings
\let\old@ps@IEEEtitlepagestyle\ps@IEEEtitlepagestyle
\def\psccfooter#1{%
    \def\ps@headings{%
        \old@ps@headings%
        \def\@oddfoot{\strut\hfill#1\hfill\strut}%
        \def\@evenfoot{\strut\hfill#1\hfill\strut}%
    }%
    \def\ps@IEEEtitlepagestyle{%
        \old@ps@IEEEtitlepagestyle%
        \def\@oddfoot{\strut\hfill#1\hfill\strut}%
        \def\@evenfoot{\strut\hfill#1\hfill\strut}%
    }%
    \ps@headings%
}
\begin{document}

\title{Effective Capacity of  a Battery Energy Storage System Captive to a Wind Farm\\
%\thanks{Identify applicable funding agency here. If none, delete this.}
}

\author{\IEEEauthorblockN{$^1$Vinay A. Vaishampayan, \textit{Fellow, IEEE}, $^1$Thilaharani Antony, and  $^2$Amirthagunaraj Yogarathnam, \textit{Member, IEEE}}
\IEEEauthorblockA{ 
\textit{$^1$Faculty of the Engineering and Environmental Science}, College of Staten Island-City University of New York, NY, USA\\
\textit{$^2$Interdisciplinary Science Department}, Brookhaven National Laboratory, Upton, NY, USA\\
emails: vinay.vaishampayan@csi.cuny.edu, thilaharani@hotmail.com,  ayogarath@bnl.gov}
%\{email comes here}
}
\maketitle

\begin{abstract}
Wind energy's role in the global electric grid is set to expand significantly. New York State alone anticipates offshore wind farms (WFs) contributing 9GW by 2035. Integration of energy storage emerges as crucial for this advancement. In this study, we focus on a WF paired with a captive battery energy storage system (BESS). We aim to ascertain the capacity credit for a BESS with specified energy and power ratings. Unlike prior methods rooted in reliability theory, we define a power alignment function, which leads to a straightforward definition of capacity and incremental capacity for the BESS. We develop a solution method based on a linear programming formulation. Our analysis utilizes wind data, collected by  NYSERDA off Long Island's coast and load demand data from NYISO. Additionally, we present theoretical insights into BESS sizing and a key time-series property influencing BESS capacity, aiding in simulating wind and demand for estimating BESS energy requirements.

%Wind energy contributions to the electric grid are poised for significant growth, globally. Contributions of 9GW from offshore wind farms (WFs) are expected in New York State alone by 2035. Energy storage emerges as a pivotal component facilitating this integration. To that end,  in this work, we consider a WF and a captive battery energy storage system (BESS). Our objective is to determine the power capacity credit for a BESS with a given nameplate energy (MWh) and power capacity (MW). We adopt a time-series approach, unlike previous approaches to the problem, which are based on reliability theory. This approach leads to a natural and simple definition of capacity and incremental capacity for the BESS. A solution method, based on a linear programming formulation is developed. Results are presented for wind data collected by NYSERDA off the Long Island coast in New York State, and demand data obtained by NYISO. Finally, theoretical results about the BESS size are presented, along with a characterization of an important time series property that determines the capacity of the BESS. This is useful in simulating wind and demand for estimating BESS energy requirements.
\end{abstract}

\begin{IEEEkeywords}
Capacity credit, wind energy, battery energy storage, linear programming. 
\end{IEEEkeywords}

\section{Introduction}
The assessment of a power source's capacity credit has traditionally been rooted in thermal power and relies on probabilistic methods, typically involving the calculation of metrics like the Loss of Load Probability (LOLP) and related quantities such as the effective load carrying capability (ELCC). Over time, these methodologies have been adapted to account for renewable resources such as wind farms (WFs) and solar-PV~\cite{Garver:1966, KMD:2011}, as well as energy-limited sources like battery energy storage systems (BESSs) \cite{Sio:2014, Parks:2019}.  The challenge of sizing BESSs to mitigate the variability of power sources or to provide peaking capacity has garnered significant attention in recent research. %Various studies have explored different aspects of this challenge, including optimal sizing methodologies, performance assessment, and the impact of storage system deployment on grid stability and reliability. 
 A multi-objective optimization model which considers economic and reliability aspects to determine optimal sizing of BESSs is proposed in~\cite{li2019multi}. A stochastic optimization framework for  sizing BESSs which considers uncertainties in renewable energy generation is developed in~\cite{wang2020stochastic}. 
Studies such as~\cite{tuohy2009impact,zhang2018performance} have evaluated the effectiveness of BESSs in smoothing renewable energy output fluctuations and reducing the need for conventional generation reserves. %The impact of different energy storage system configurations on grid stability and reliability metrics is studied in~\cite{chen2021impact}.

The deployment of BESSs for providing peaking capacity has also been a subject of interest. In~\cite{wu2017study,wang2019optimal} the role of BESSs in meeting peak demand and reducing the reliance on expensive peaking power plants is considered. These studies demonstrate the potential of BESSs to provide flexible and reliable peaking capacity, particularly when coupled with renewable energy sources. The complexity of computing the effective capacity motivated the simplified mathematical model presented in~\cite{Dejvises:2016} and approximations in \cite{BESS:Denholm2020}, notably the peak demand reduction credit. Our work is in the spirit of~\cite{Dejvises:2016,BESS:Denholm2020}---to present and work with a simple analytic model. Specifically, our contribution is the definition of the power alignment function and its computation, in showing that effective capacity and incremental capacity values can be derived from it, and in obtaining theoretical results  and statistical insights into wind and demand modeling.   
%Research efforts documented in \cite{Dejvises:2016, BESS:Mak2010, BESS:Denholm2020} address the complexities of determining the optimal size of energy storage systems. One prominent approach in this domain is the peak demand reduction credit (PDRC)~\cite{BESS:Denholm2020}, which focuses on mitigating peak demand through the deployment of storage units. The PDRC is evaluated by assessing various storage power capacities and determining the corresponding energy capacity required to reduce peak demand. Notably, increasing power capacity while holding energy capacity constant results in diminishing PDRC.

%These studies often rely on simulation-based analyses to explore the behavior of storage systems, particularly around peak load times. The insights gained from such analyses provide valuable intuition for optimizing energy storage system configurations and enhancing their effectiveness in managing power source variability.

The key role of a BESS is in time-shifting or aligning the power availability  to demand. Since a BESS does not generate power, it \emph{must} be coupled with a power source to assess its capacity credit. In this context, we aim to define the capacity credit that a BESS should receive when it is tied in to a time-varying power source such as a wind farm. This is referred to as a  captive BESS for that wind farm. Our approach leads to a natural definition of power capacity and incremental power capacity for a captive BESS, based on a time series methodology rather than traditional LOLP-based methods.
The paper is organized as follows. Section~\ref{sec:Problemformulation} defines  the power alignment function, which is fundamental to capacity estimation. Section~\ref{sec:batterysize} describes the physical setup, the governing equations and the method for computing the power alignment function and the capacity.  Numerical capacity estimates based on NYSERDA wind speed measurements and NYISO demand data are in Sec.~\ref{sec:results}. Theoretical results are sketched in Sec.~\ref{sec:theory}. The paper is summarized and conclusions are presented in Sec.~\ref{sec:summary}. Some proofs are presented in the Appendix.

\section{The Power Alignment Function} \label{sec:Problemformulation}
\begin{figure}[htbp] %  figure placement: here, top, bottom, or page
   \centering
   \includegraphics[width=3in]{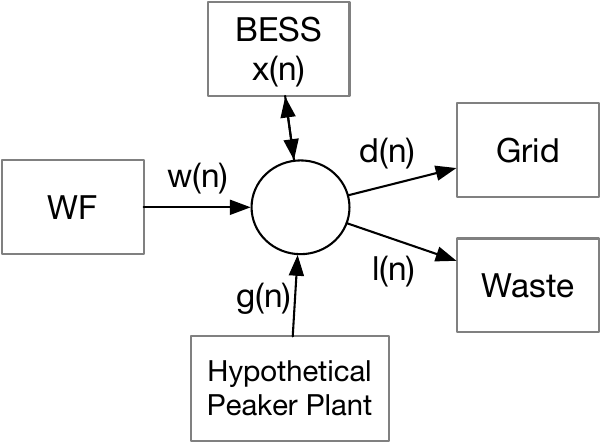} 
   \caption{Test setup for calculating the power alignment function $g(B,P)$.}
   \label{fig:testsetup}
\end{figure}
\begin{figure}[htbp] %  figure placement: here, top, bottom, or page
   \centering
    \vspace{-15pt}
   \includegraphics[width=2.55in]{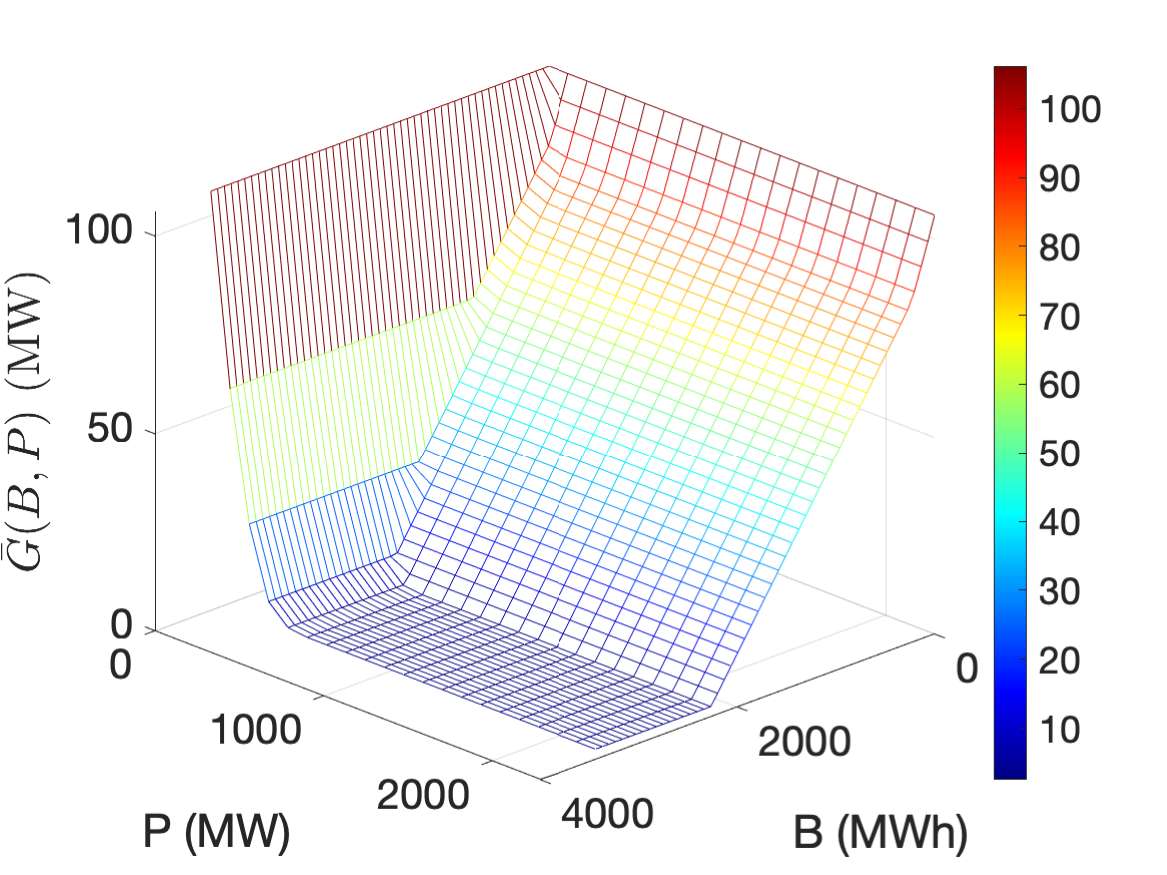}  
    \vspace{-7pt}
   \includegraphics[width=2.55in]{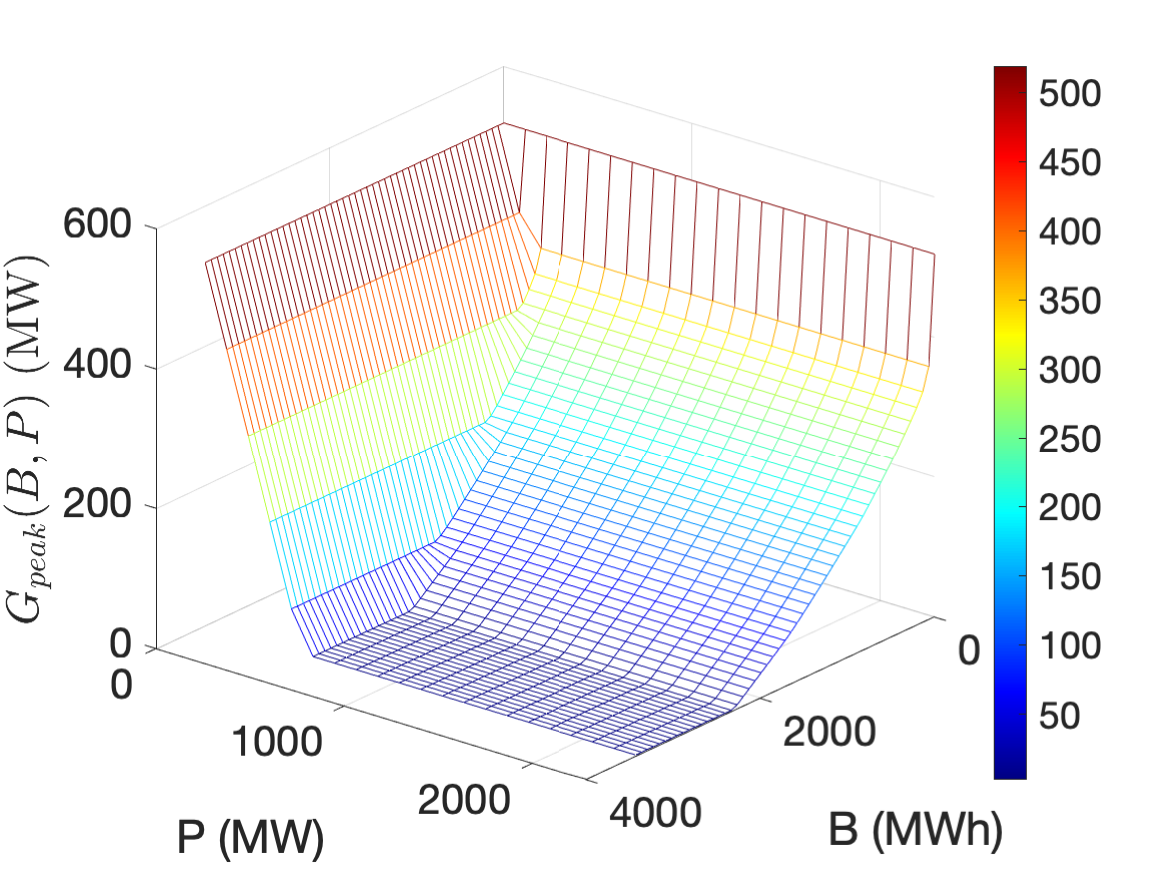} 
\vspace{10pt}
   \caption{Power Alignment Functions for a Day 1, $\AvGas(B,P)$   (top), $\PeakGas(B,P)$  (bottom). The corresponding wind power and demand traces are in Fig.~\ref{fig:threedaysdata}.}
   \vspace{-10pt}
   \label{fig:gbarbp}
\end{figure}
In this section, we describe our framework  for assigning credit to a captive BESS, and define some basic functions.
Our setup consists of  a WF, a captive BESS, a load, and a hypothetical peaker power plant as illustrated in Fig.~\ref{fig:testsetup}.  Our approach is to define the \emph{Power Alignment Function}, $\PowAl(B,P)$ which measures the power required from the peaker plant as a function of the energy rating, $B$ MWh, and power rating, $P$ MW, of the BESS.  From this we derive two capacities, the absolute capacity (hereafter referred to as the  capacity) and the incremental capacity. The basic idea is as follows. A mismatch between demand and supply results in incomplete utilization of wind power. The capacity credit for a BESS  is the amount by which this utilization can be increased. We measure this gain in utilization by the reduction in power required from a hypothetical peaker power plant. Thus the fundamental quantity is the amount of peaker plant power for a BESS system with a given $(energy,power)$ rating $(B,P)$, which is the above-mentioned power alignment function. Precise definitions follow.%$\AvGas(B,P)$ and $\PeakGas(B,P)$ 
\begin{definition} 
\label{def:PA}
The power alignment function for a given variable energy source, a given demand and a BESS having energy rating $B$ MWh and power rating $P$ MW is the \emph{minimal} peaker plant power, $\PowAl(B,P)$,  required such that demand can be met at all times. 
\end{definition}
%which measures the  peaker plant power required for a WF with a captive BESS  and with a 
\begin{remark}The peaker plant power could be either the average power or the peak power---the specific measure used will be identified with a suitable subscript. Thus $\AvGas(0,0)$ and $\PeakGas(0,0)$ measure the average and peak peaker plant power, respectively, when there is no captive BESS. 
\end{remark}
\begin{remark}
While the definition places no restrictions on the average demand and wind power, a meaningful tradeoff between BESS rating and peaker plant power will usually be obtained when  the average wind  and demand powers are equal.  
\end{remark}
\begin{definition}
The capacity assigned to the BESS with energy rating $B$ and power rating $P$ is  $$\kappa(B,P)=\PowAl(0,0)-\PowAl(B,P),$$ which is the reduction in  peaker plant power due to the BESS. The normalized capacity is given by $\kappa(B,P)/\PowAl(0,0)$. 
\end{definition}
\begin{definition}
The \emph{incremental}  $P$-capacity for  power rating $P$ and energy rating $B$ with $B=f(P)$, where $f$ is a function that defines the dependence of $B$ on $P$  is given by $-d g(f(P),P)/dP$, which can be obtained in  terms of the gradient of the power alignment function $$\nabla  \PowAl (B,P)=\begin{pmatrix} \partial \PowAl(B,P)/\partial B,  & \partial \PowAl(B,P)/\partial P \end{pmatrix}$$ using 
$$
-\nabla \PowAl(f(P),P)\cdot\mathbf{v}
$$
with $\mathbf{v}=\begin{pmatrix} df(P)/dP, & 1 \end{pmatrix}$. 

On the other hand if $P=f(B)$ defines the dependence of $P$ on $B$, then the incremental $B$-capacity is $-d g(B,f(B))/dB$ which can be obtained in terms of the gradient of the power alignment function using
$$
-\nabla \PowAl(B,f(B))\cdot\mathbf{v}
$$
where $\mathbf{v}=\begin{pmatrix} 1, & df(B)/dB\end{pmatrix}$. 
\end{definition}
\begin{remark}
The operator ''$\cdot$'' is the dot product, also known as the inner product, between two vectors. 
As an example, if the power rating $P$ is held constant and the energy rating $B$ is increased, then $\mathbf{v}=\begin{pmatrix} 1, & 0 \end{pmatrix}$ and the incremental $B$-capacity is simply $-\partial \PowAl(B,P)/\partial B$ and the incremental $P$-capacity is undefined. As another example if $B=4P$, corresponding to $4$-hour storage, then the incremental $P$-capacity is obtained by setting
$\mathbf{v}=\begin{pmatrix} 4, & 1 \end{pmatrix}$ and is given by $-(4\partial \PowAl(4P,P)/\partial B + \partial \PowAl(4P,P)/\partial P)$. On the other hand the incremental $B$-capacity for $4$-hour storage is obtained by writing $P=B/4$, setting $\mathbf{v}=\begin{pmatrix} 1, & 1/4 \end{pmatrix}$, and is given by
$-(\partial \PowAl(4P,P)/\partial B + (1/4)\partial \PowAl(4P,P)/\partial P)$.
\end{remark}
The incremental capacity captures the incremental power savings when additional BESS storage is added. The method for computing $\PowAl(B,P)$ is presented in Sec.~\ref{sec:batterysize}. 
%the negative of the partial derivative $-\partial \AvGas(B,P)/\partial B$ ($-\partial {G}_{peak}(B,P)/\partial B$). Other incremental capacities can be similarly defined, e.g., the incremental (with respect to power) average  capacity for a $4$ hour BESS is the derivative $-d\AvGas(4P,P)/dP$. 

Typical examples of the power alignment function that we have computed are shown in Fig.~\ref{fig:gbarbp}. These are based on NYSERDA wind speed measurements~\cite{b1} and NYISO demand traces~\cite{b2} shown in Fig.~\ref{fig:threedaysdata}. The surface is seen to be constant along the boundary $B=0$ and $P=0$ and decreases as $B$ and $P$ increase until it flattens out.

It is worth noting that our definition of incremental capacity is far simpler and easier to calculate than the traditional estimate, which involves increasing the generation capacity and then incrementing the demand to maintain a constant LOLP. Our approach also has the benefit that it is amenable to theoretical analysis. Specifically, the value $\AvGas(B,0)=\AvGas(0,P)$,  the value of $B$, $P$ required to reduce $\AvGas(B,P)$ to zero (under some conditions) and a key characteristic of the excess demand time series that determines the shape of $\AvGas(B,P)$ are identified in Sec.~\ref{sec:theory}.

\section{Computing the Effective Capacity}
\label{sec:batterysize}
In this section we describe the physical setup, mathematical notation, governing equations for the BESS state and formulate the solution as an optimization problem.

\subsection{Definitions of Energy and Power Terms}
Energy is measured in MWh, and power in MW, and this is assumed in the equations that follow. Here we develop notation and equations of state for the BESS. Our system consists of an integrated WF and battery storage system, which feeds/absorbs power to/from an electrical grid, as illustrated in Fig.~\ref{fig:testsetup}. A hypothetical peaker plant makes up power shortfalls. 
Electrical energy attributed to wind is represented by the time series $w(n),~n=0,1,2,\ldots$, where  $n$ indexes the sample, corresponding to the time instant $n\Delta$  (measured in hours from a suitable time origin). Thus $w(n)$ is the total wind-generated  energy between time instants $(n-1)\Delta$ and $n\Delta$. The corresponding energy load or demand is denoted by $d(n),~n=0,1,2,\ldots$ while the  peaker plant energy is denoted by $g(n)$. The BESS has energy rating $B$ and power rating $P$. Let $l(n)$ be the wind energy that is lost because supply exceeds demand and the amount of energy that can be stored at time instant $n$. Let $x(n)$ be the battery state at time $n$. We assume that $\alpha$, $0\leq \alpha \leq 1$ is the battery energy loss over duration $\Delta$. 
The average wind, demand, peaker plant and loss powers over a window of $N$ samples are given by $\AvWind=(1/(N\Delta))\sum_{i=1}^Nw(i)$, $\AvDemand=(1/(N\Delta))\sum_{i=1}^Nd(i)$, $\AvGas=(1/(N\Delta))\sum_{i=1}^Ng(i)$ and $\AvLoss=(1/(N\Delta))\sum_{i=1}^Nl(i)$, respectively. The peak peaker plant power is given by $\PeakGas=(1/\Delta)\max_{i\in \{1,2,\ldots,N\}}g(i)$.
 
From the definition of the power alignment function (Definition \ref{def:PA}) it follows that $\PowAl(B,P)$ is the minimum value of $\AvGas$ ($\PeakGas$) for given $w(n),~d(n),~n=1,2,\ldots,N$ and BESS parameters $B$ and $P$, if the measure of interest is the average (peak) peaker plant power.
%$\bar{W}=60W(N)/(N\Delta)$, $\AvgDemand=60D(N)/(N\Delta)$ and the average gas and loss powers are 
% $$\bar{G}=\frac{60}{N\Delta}\sum_{n=1}^N g(n)$$
%and 
%$$\bar{L}=\frac{60}{N\Delta}\sum_{n=1}^Nl(n),$$
%respectively. 
\subsection{Linear Program for the Power Alignment Function}
Calculation of the power alignment function is formulated as a linear program (LP) in this section. Both the average and maximum peaker plant power problems are solved with the formulation below. Our objective is to minimize the peaker plant power for a given sequences of wind energy, demand, and constraints imposed by the BESS specification on total stored energy and the rate at which it can be charged and discharged. Any algorithm or protocol that dictates the evolution of the BESS state must obey the energy conservation law 
\begin{eqnarray}
x(n)  =  \alpha x(n-1)+(w(n)-d(n))+(g(n)-l(n)), & \  &\nonumber \\
 ~n=1,2,\ldots,N.
\label{eqn:newcompactevo}
\end{eqnarray}
This leads to the following problem formulation.

\begin{optimization}
Given  wind and demand energy sequences, $w(n),~d(n),~n=1,2,\ldots,N$, respectively, minimize the  peaker plant power  by selecting the starting state $x(0)$ and non-negative sequences $g(n), l(n),~n=1,2,\ldots,N,$
subject to the stored energy constraints,
\begin{equation}
0 \leq x(n) \leq B,~ n=0,1,2,\ldots,N,
\end{equation}
%\begin{equation}
%x(0)\leq x(n),
%\end{equation}
the power constraints,
\begin{equation}
|x(n)-\alpha x(n-1)| \leq \Delta P,~n=1,2,\ldots,N,
\end{equation}
and the energy conservation constraints on $x(n)$ given by \eqref{eqn:newcompactevo}.
\label{op:op1}
\end{optimization}

%\begin{optimization}
%Given  wind and demand energy sequences, $w(n),~d(n),~n=1,2,\ldots,N$, respectively, minimize the average peaker plant power  given by $\AvGas(B,P)$ over non-negative sequences $g(n), l(n),~n=1,2,\ldots,N,$ and starting state $x(0)$,
%subject to the constraints
%\begin{equation}
%0 \leq x(n) \leq B,~ n=0,1,2,\ldots,N,
%\end{equation}
%\begin{equation}
%x(0)\leq x(n),
%\end{equation}
%\begin{equation}
%|x(n)-\alpha x(n-1)| \leq \Delta P,
%\end{equation}
%where $x(n)$ obeys the recursion \eqref{eqn:newcompactevo}.
%\label{op:op1}
%\end{optimization}
%
%\begin{optimization}
%Given wind and demand energy sequences, $w(n),~d(n),~n=1,2,\ldots,N$, respectively, minimize the peak peaker plant power  given by $\PeakGas(B,P)$ over non-negative sequences $g(n), l(n),~n=1,2,\ldots,N,$ and starting state $x(0)$,
%subject to the constraints
%\begin{equation}
%0 \leq x(n) \leq B,~ n=0,1,2,\ldots,N,
%\end{equation}
%\begin{equation}
%x(0)\leq x(n),
%\end{equation}
%\begin{equation}
%|x(n)-\alpha x(n-1)| \leq \Delta P,
%\end{equation}
%where $x(n)$ obeys the recursion \eqref{eqn:newcompactevo}.
%\label{op:op2}
%\end{optimization}
%
%The solutions to  Optimization Problems \ref{op:op1} and \ref{op:op2}  are readily obtained using standard linear programming solvers. 

%We emphasize that the LP formulation obtains the minimal peaker plant power \emph{and} the initial battery state in a single shot. 

\subsection{Greedy Charging Protocol}
The greedy charging protocol is a specific (and obvious) battery charging protocol. It is described in detail in this section.
%The initial state of the battery $x(0)$ satisfies $0 \leq x(0) \leq B$. 

Let
\begin{equation}
f(n)  =  \alpha x(n-1) + (w(n)-d(n)),~n=1,2,\ldots .
\label{eqn:recursion1}
\end{equation}
$f(n)$ is an intermediate quantity used to simplify the description of the battery state evolution given by
%\end{multicols}
%\par\noindent\rule{\dimexpr(0.5\textwidth-0.5\columnsep-0.4pt)}{0.4pt}%
%\rule{0.4pt}{6pt}
\begin{eqnarray}
\lefteqn{x(n)  =} & \  \nonumber  \\
&  \left\{ \begin{array}{ll} B, & f(n) > B, \\
        \Delta P +\alpha x(n-1), & \Delta P+\alpha x(n-1) < f(n) \leq B, \\
         f(n), & max\{-\Delta P + \alpha x(n-1), 0 \}  \leq  \\
         \ & \leq f(n) \leq  \\
         \ & \leq \min\{\Delta P+\alpha x(n-1),B\},  \\
         -\Delta P + \alpha x(n-1), & 0< f(n)  \leq -\Delta P +\alpha x(n-1), \\
         0, & f(n) < 0. \end{array}
         \right. \nonumber \\
         & 
         \label{eqn:funditerate}
\end{eqnarray}
%\vspace{\belowdisplayskip}\hfill\rule[-6pt]{0.4pt}{6.4pt}%
%\rule{\dimexpr(0.5\textwidth-0.5\columnsep-1pt)}{0.4pt}
%\begin{multicols}{2}
%It is useful to define the function $\gamma$ and write \eqref{eqn:funditerate} in more compact form as 
%\begin{eqnarray}
%x(n) & = & \gamma(f(n),B_{min},B_{max}) \nonumber \\
%& = & \max\left[\min\left[f(n),B_{max}\right],B_{min}\right].
%\end{eqnarray}
%
%If $f(n)<0$, there is excess demand, which can only be satisfied by using the hypothetical peaker plant.  
The peaker plant energy, $g(n)$, is given by
\begin{equation}
g(n)=\left\{\begin{array}{cc} x(n)-\alpha x(n-1)+  &  \\
+d(n)-w(n), & d(n)-w(n) > 0, \\
0, & d(n)-w(n)\leq 0,
\end{array} \right.
\label{eqn:gas}
\end{equation}
and the lost wind energy, $l(n)$, is given by
\begin{equation}
-l(n)=\left\{\begin{array}{cc} x(n)-\alpha x(n-1) + \  \\ 
+d(n)-w(n), & d(n)-w(n) < 0, \\
0, & d(n)-w(n) \geq 0.
\end{array} \right.
\label{eqn:loss}
\end{equation}
It follows immediately that the greedy protocol respects energy conservation given in \eqref{eqn:newcompactevo}, as  expected. %It is worth noting that \eqref{eqn:newcompactevo} is a statement of the  energy conservation law and must be respected by any battery charging protocol.

The power alignment function $\PowAl(B,P)$ is, in general,  a lower bound for the peaker plant power required by the greedy protocol described in this section. But it is shown to match the lower bound in a specific case in Section~\ref{sec:theory}.
%We start by making the observation that
%\begin{eqnarray}
%x(n)-B_{min}=\gamma(f(n)-B_{min},0,B_{max}-B_{min}), 
%\nonumber \\
%~~~~~~~~~~~~~~~~~~~~~~~~~~~~n=1,2,\ldots.
%\end{eqnarray}
%Thus we may set $B_{min}$ to $0$ and replace $B_{max}$ by $B_{max}-B_{min}$ in equations \eqref{eqn:recursion1}--\eqref{eqn:loss} with no loss in generality. We will do so in the sequel and will  assume that $0\leq x(0) \leq B_{max}$.

%To summarize, the following equations govern the evolution of the battery state.
%\begin{eqnarray}
%f(n) & = & x(n-1)+w(n)-d(n)  \label{eqn:compactevo}\\
%x(n) & = & \max[\min[f(n),B_{max}],0],  \\
%g(n) & = & \min[-f(n),0]  \\
%l(n) & = & \max[f(n)-B_{max},0],
%\label{eqn:fullevo}
%\end{eqnarray}
%for $n=1,2,\ldots,N$, with initial condition $x(0)=B_0$. 
%if sequences $g$ and $l$ are known.

\section{Numerical Results based on Measured Wind Velocity}
\label{sec:results}
\subsection{Wind Power Data}\label{sec:windspeed}
\begin{figure}[htbp] %  figure placement: here, top, bottom, or page
   \centering
   \vspace{-15pt}
   \includegraphics[width=2.75in]{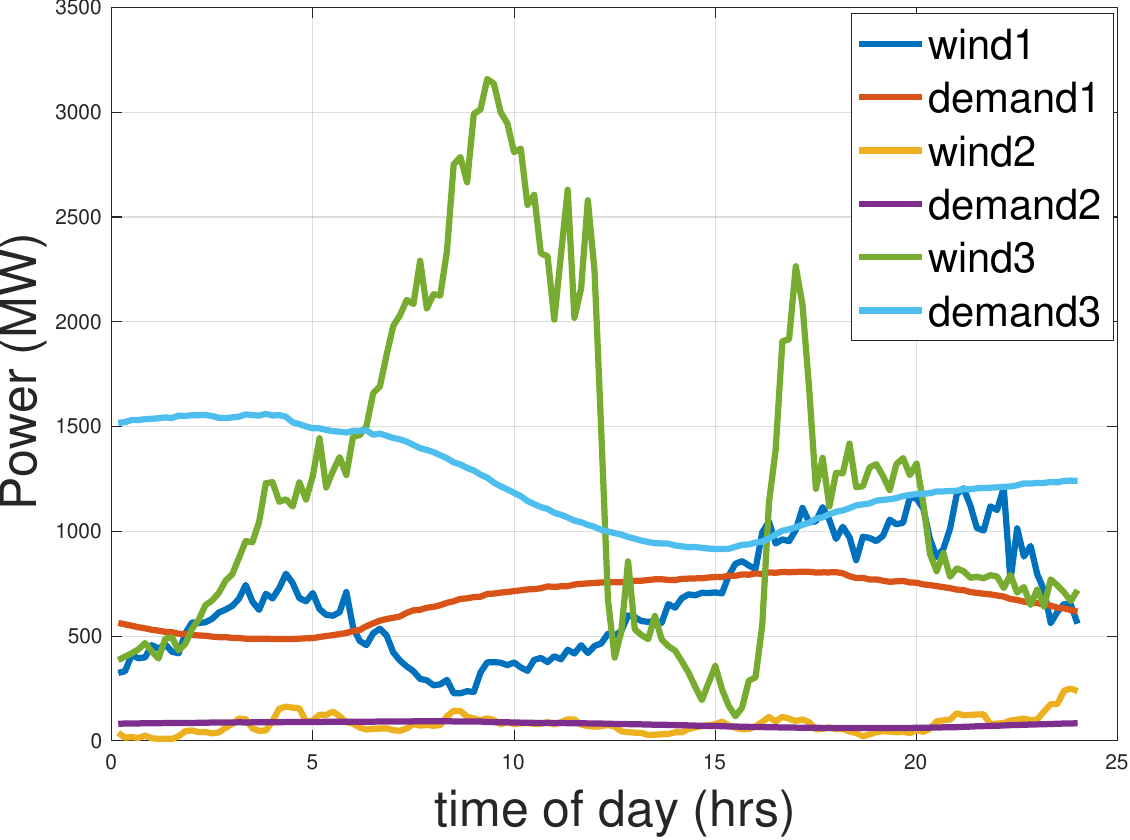} 
   \caption{Wind and demand traces for 9/20/2019 (Day 1), 9/03/2019 (Day 2) and 9/04/2019 (Day 3). Average demand power has been adjusted to equal average wind power.  }
   \label{fig:threedaysdata}
\end{figure}
\begin{figure}[htb] %  figure placement: here, top, bottom, or page
   \centering%   \vspace{-1.6in}
%  \includegraphics[width=0.2\textwidth]{GbarBmaxDay1.pdf}
%%\vspace{-1.6in}
%\includegraphics[width=0.2\textwidth]{GbarBmaxDay2.pdf}
%%\vspace{-1.6in}
%\includegraphics[width=0.2\textwidth]{GbarBmaxDay3.pdf}
%%\vspace{-1.6in}
   \includegraphics[width=0.24\textwidth]{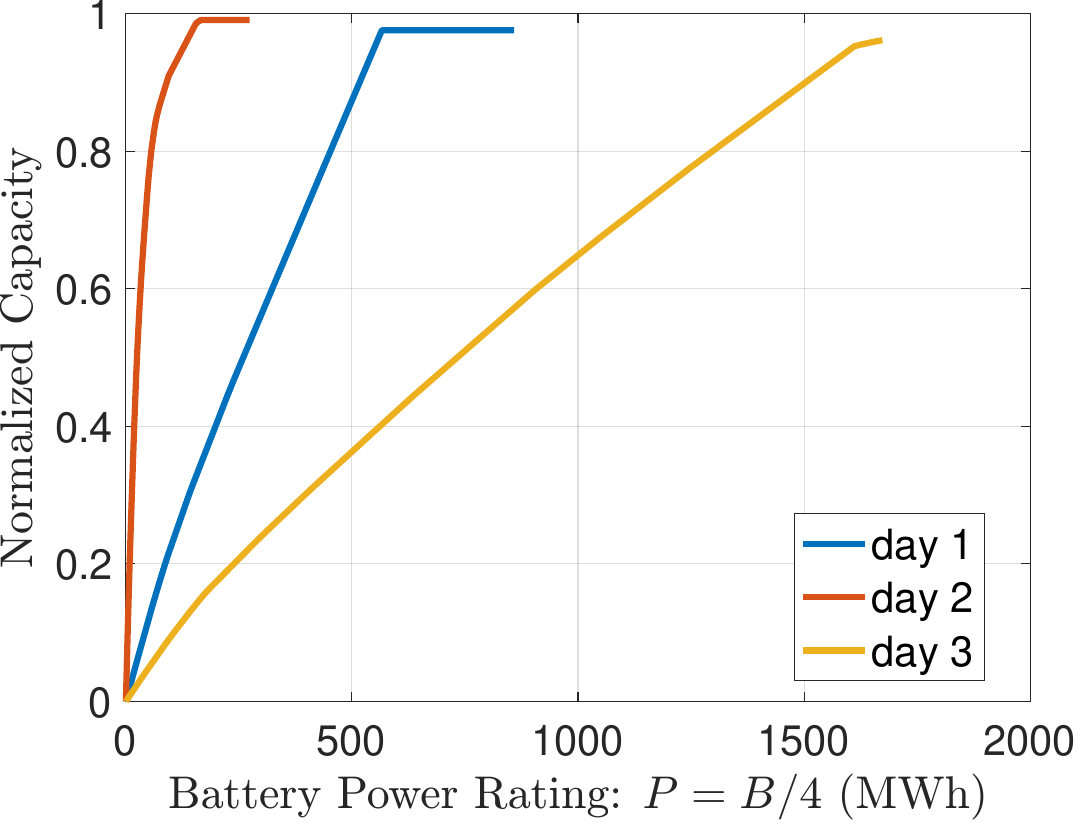} \includegraphics[width=0.24\textwidth]{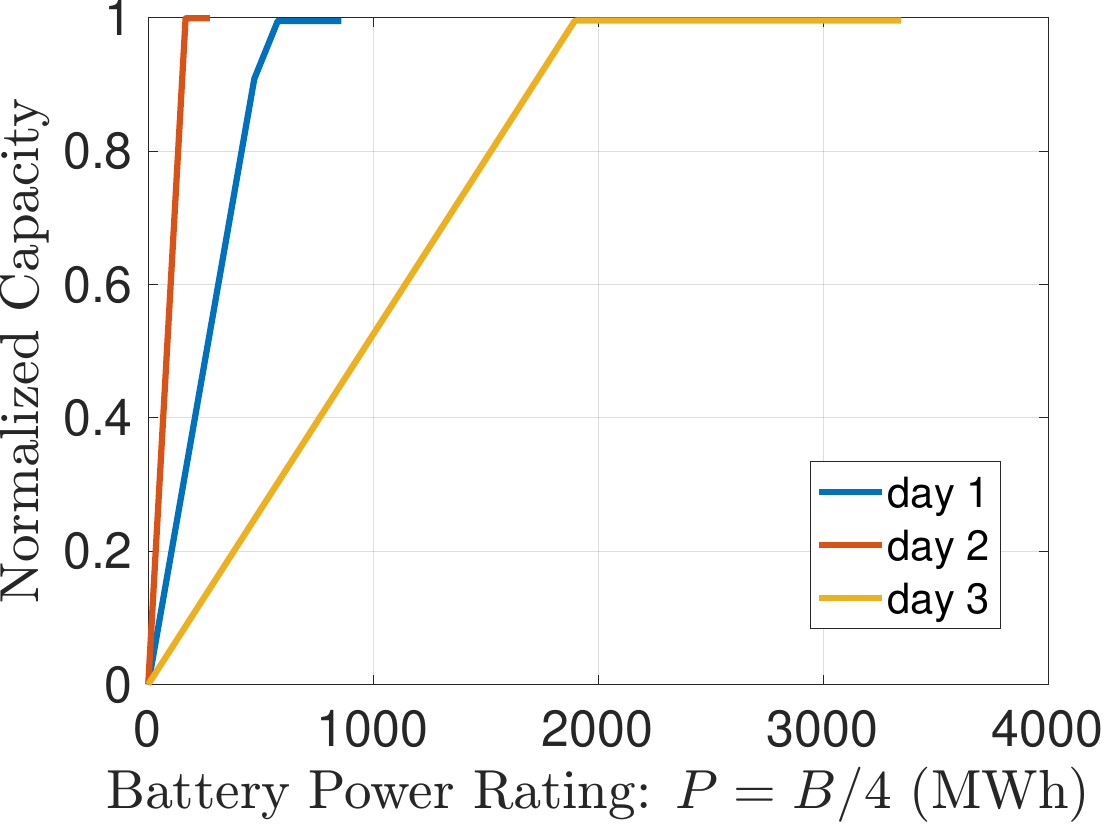}\\
   \includegraphics[width=0.24\textwidth]{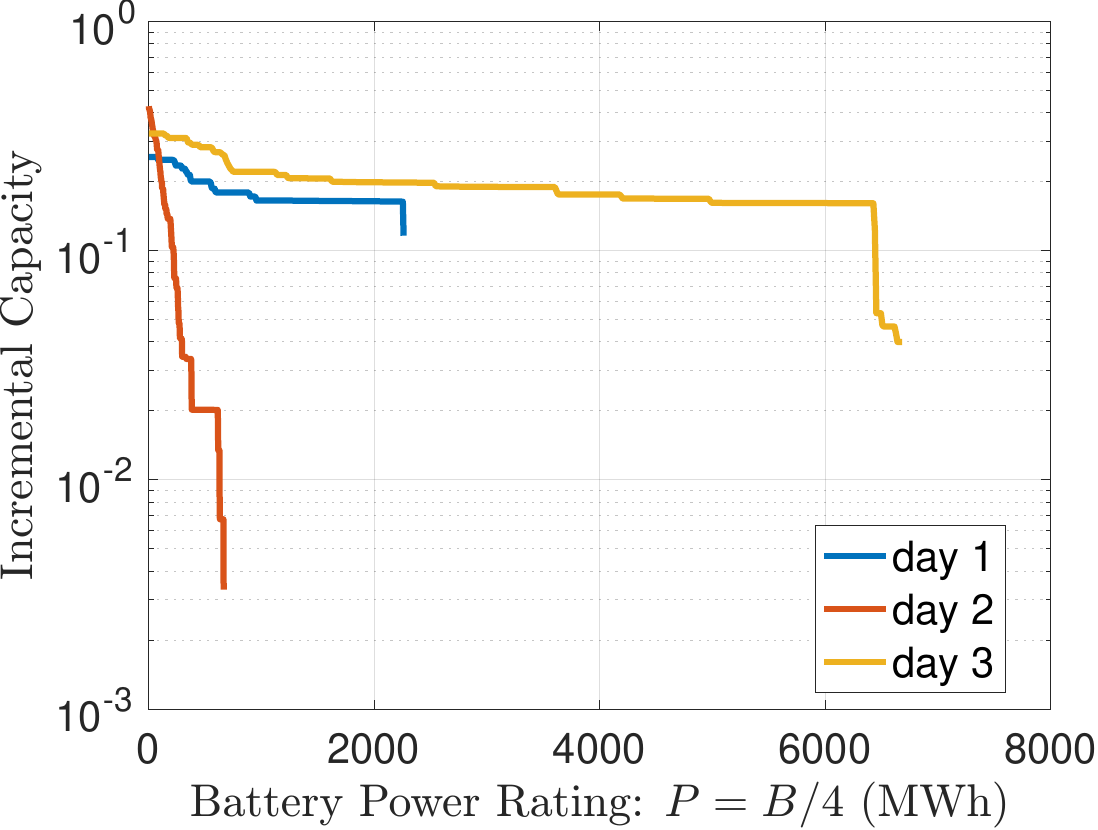}
   \includegraphics[width=0.24\textwidth]{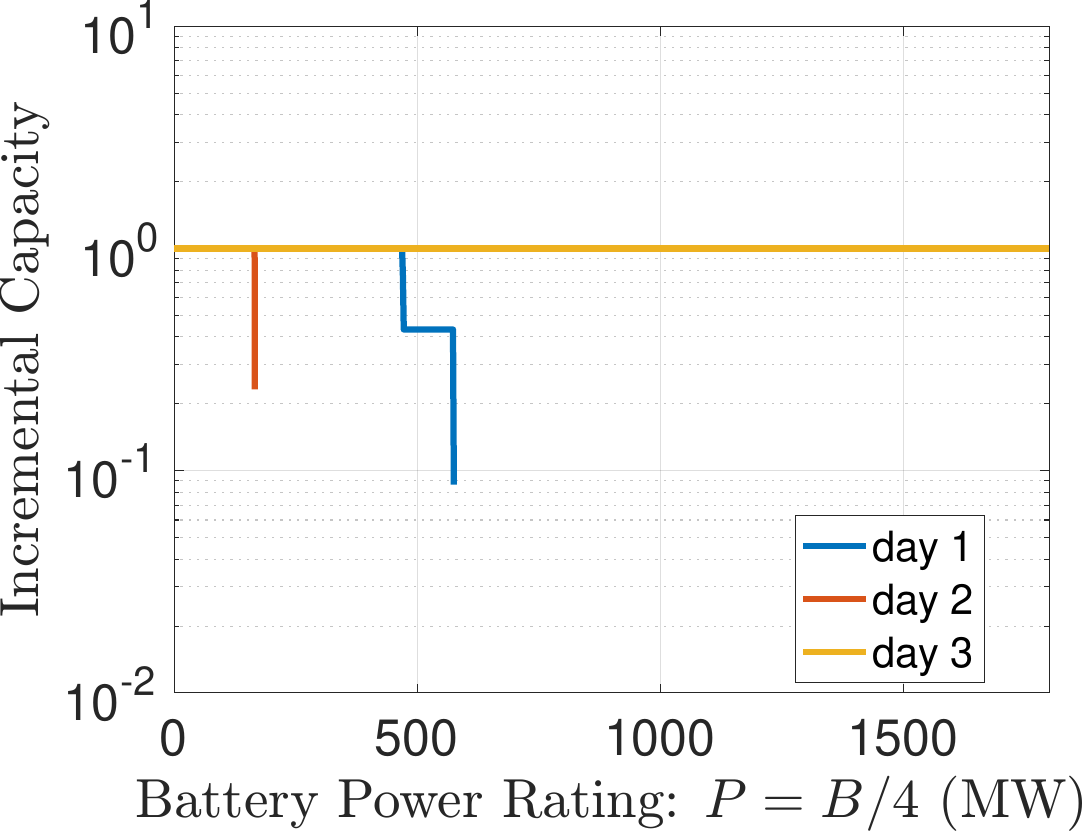}
%\vspace{-0.7in}
    \caption{(Top left) Normalized average capacity $(\AvGas(0,0)-\AvGas(B,B/4))/\AvGas(0,0)$, (Top right) Normalized peak capacity, (Bottom left) Incremental average capacity $-d\AvGas(B,P)/dP$ (for a 4 hour BESS i.e. $B=4P$) (Bottom right) incremental peak capacity.  $5$\% battery loss. Based on wind and demand traces for the 3 days shown in Fig.~\ref{fig:threedaysdata}.}
    \vspace{-10pt}
    \label{fig:tradeoff}
\end{figure}
NYSERDA Hudson North horizontal wind speed data~\cite{b1} was used to estimate wind energy $w(n)$. NYISO data was used for the demand power, after suitable scaling to equalize demand and wind power. The NYSERDA wind speed data was obtained at a sampling rate of $50$ samples per second. Measurements were then averaged over non-overlapping $10$-minute intervals. The accuracy of the wind speed measurement is 0.1 m/s and range of measured velocities  is from $1 m/s$ to $80 m/s$. Wind power in MW for a wind speed $v$ m/s and swept area $A~m^2$ was estimated by $(10^{-3}/2)\rho C_p v^3 A$ for a wind turbine with radius $118$ m and the air density $\rho=1.225~kg/m^3$. We show in Fig.~\ref{fig:threedaysdata} traces of estimated wind power and demand on three different days in 2019,  9/20 (Day 1), 9/03 (Day 2) and 9/04 (Day 3). 
\subsection{Power Alignment Function and Capacity Estimates}
The Power Alignment Function is shown in Fig.~\ref{fig:gbarbp}, and normalized and incremental capacity estimates are shown in Fig.~\ref{fig:tradeoff} for the three days whose time series are presented in  Fig.~\ref{fig:threedaysdata}. The battery energy loss was set to 5\% over a 24 hour period. In all cases, we assume a $4$ hr battery, which means that the power rating $P$ satisfies $P=B/4$. The Normalized Average Capacity for a $4$ hour BESS is given by $(\AvGas(0,0)-\AvGas(B,B/4))/\AvGas(0,0)$. The incremental (w.r.t. power) average capacity is the derivative $-d\AvGas(4P,P)/dP$ for a $4$ hr BESS. Similar quantities can be defined for the peak capacity.

The lost average power, when there is no storage, is  $\AvGas(0,0)=106,~16 \mbox{ and } 332$ MW for Day 1 (as is seen in Fig.~\ref{fig:gbarbp}), Day 2 and Day 3, respectively. From Fig.~\ref{fig:tradeoff} (top), we see that in all cases, the lost power can be recovered almost completely. However, the battery energy capacity varies significantly by day.
The BESS energy rating required to recover 50\% of the lost wind power is $B=1037.0,~98.5 \mbox{ and } 2931.6$ MWh, and the corresponding BESS power rating is $259.3,~24.6 \mbox{ and }732.9$ MW, for Day 1, Day 2, and Day 3 respectively. Thus on Day 1, recovery of  $106$~MW of wind power requires a 4 hr BESS with power rating $259.3$~MW, an efficiency of $40.9\%$. The efficiencies on Day 2 and Day 3 are $64.94 \% \mbox{ and } 45.30 \%$, respectively. In all cases, the incremental capacity is seen to decline or remain flat with $B$, though the rates of decline are significantly different. This depends on the run structure of $R(n)$ in \eqref{eqn:cumexdemand}, as will be seen in Sec.~\ref{sec:theory}.

\section{Theoretical Results}
 \label{sec:theory}
We have presented numerical results which give a general idea about the shape of the power alignment function $\PowAl(B,P)$.  In this section we develop a quantitative characterization of $\PowAl(B,P)$ based on characteristics of the demand sequence $d(n)$ and the wind sequence $w(n)$.  We do so by first developing a general lower bound for $\AvGas(B,P)$  in Sec.~\ref{sec:theory-a}. This bound is refined, based on characteristics of the wind and demand sequences in Sec.~\ref{sec:theory-b}, which leads to a natural graph interpretation of the lower bound as a flow problem.
 \subsection{A General Lower Bound for $\AvGas(B,P)$}
 \label{sec:theory-a}
 For convenience of the reader we restate the energy conservation equations \eqref{eqn:newcompactevo}, and define two new sequences $r(n)$ and $h(n)$ below
\begin{equation}
x(n)=\alpha x(n-1)+\underbrace{(w(n)-d(n))}_{-r(n)}+\underbrace{(g(n)-l(n))}_{h(n)}.
\label{eqn:conservation}
\end{equation}
Note that, upon defining $\relu{x}=\max \{x,0\}$, and using the fact that $g$ and $l$ are non-negative, we get  $g(n)=\relu{h(n)}$ and $l(n)=\relu{h(n)}-h(n)$, i.e, $g$ and $l$ can be recovered from $h$. A similar remark holds for $r$, $w$ and $d$.
 
The energy conservation equations \eqref{eqn:conservation} can be put into matrix form
\begin{equation}
\begin{pmatrix}h(1) \\h(2)\\ \vdots \\h(N) \end{pmatrix}=\underbrace{\begin{pmatrix}-\alpha & 1 & & &  \\ \  & -\alpha & 1 & & \\ &  & & \ddots & \\ & & & -\alpha & 1\end{pmatrix}}_{\mathbf{A}} \begin{pmatrix}x(0) \\x(1)\\ \vdots \\x(N) \end{pmatrix} + \begin{pmatrix}r(1) \\r(2)\\ \vdots \\r(N) \end{pmatrix}
\end{equation}
where $\mathbf{A}$ is an $N\times (N+1)$ matrix. 
Multiplication of both sides of the above equation on the left by the row vector $e^t=(0^i1^j0^k)$\footnote{This is the row vector with $i$ $0$'s followed by $j$ $1$'s followed by $k$ $0$'s.} with $i+j+k=N$ gives 
\begin{equation}
\sum_{l=i+1}^{i+j}h(l)=e^t\mathbf{A}\mathbf{x}+\sum_{l=i+1}^{i+j}r(l)
\end{equation}
Since $h(n)=g(n)-l(n)$ and $l(n)\geq 0$ we get the lower bound
\begin{eqnarray}
\sum_{l=i+1}^{i+j}g(l)\geq \sum_{l=i+1}^{i+j}h(l) \geq \sum_{l=i+1}^jr(l)-(-e^t\mathbf{A}\mathbf{x}).
\label{eqn:lb.1}
\end{eqnarray}
Note that equality holds if and only if $l(n)=0$ for $n=i+1,i+2,\ldots,j$. Simplification of the last term on the right hand side of \eqref{eqn:lb.1} gives
\begin{eqnarray}
e^t\mathbf{A}\mathbf{x} = (x(i+1)-\alpha x(i)) + (x(i+2)-\alpha x(i+1)) +\ldots   \nonumber \\
 \ldots +  (x(i+j)-\alpha x(i+j-1)). \nonumber \\
%& \geq &  -\alpha B+\beta,
\label{eqn:lbgen}
\end{eqnarray}
%where $\beta=0$ when $\alpha=1$ or $\Delta P \geq B$, corresponding to the case where the power constraint is redundant.  When the power constraint is non-redundant, i.e. $\Delta P < B$, we get
%\begin{equation}
%\beta=(1-\alpha)s(B-(s+1)\Delta P/2)
%\end{equation}
%with $s=\min\{j,\lfloor B/(\Delta P)\rfloor\}$. Equality holds in \eqref{eqn:lbgen} if and only if $l(n)=0$ for $n=i+1,i+2,\ldots,j$, $x(i)=B$, and $x(i+l)=B-l\Delta P$ for $l=1,2,\ldots,s$.

\subsection{Graph Interpretation and Refinements of the Lower Bound}
\label{sec:theory-b}
For the results that follow, we need the following cumulative quantities:
\begin{eqnarray}
W(n) & = & \sum_{i=1}^nw(i),~n=1,2,\ldots,N, \nonumber \\ 
D(n) & = & \sum_{i=1}^nd(i),~n=1,2,\ldots,N, \nonumber \\
G(n) & = & \sum_{i=1}^n g(i),~n=1,2,\ldots,N. \end{eqnarray}
Also, define
\begin{equation}
R(n)=\sum_{i=1}^n r(i), ~i=1,2,\ldots,N
\label{eqn:cumexdemand}
\end{equation}
 and  
\begin{equation}
R^+(n)=\sum_{i=1}^n \relu{d(i)-w(i)},~i=1,2,\ldots,N.
\end{equation}
Let $\AvExDem=R(N)/(N\Delta)$ be the average excess demand power.

%\subsection{Characterizing the Middle: $\AvGas(B,B/\Delta)$, $B<B^\#$.}
%\label{sec:middle}
%We now consider values of  BESS energy rating $B$ smaller than $B^\#$ defined in the preceding paragraph, assuming $\alpha=1$ (note that the power constraint is inactive since $P=B/\Delta$). 
\begin{figure}[htbp] %  figure placement: here, top, bottom, or page
   \vspace{-5pt}
   \centering
   \includegraphics[width=2.0in]{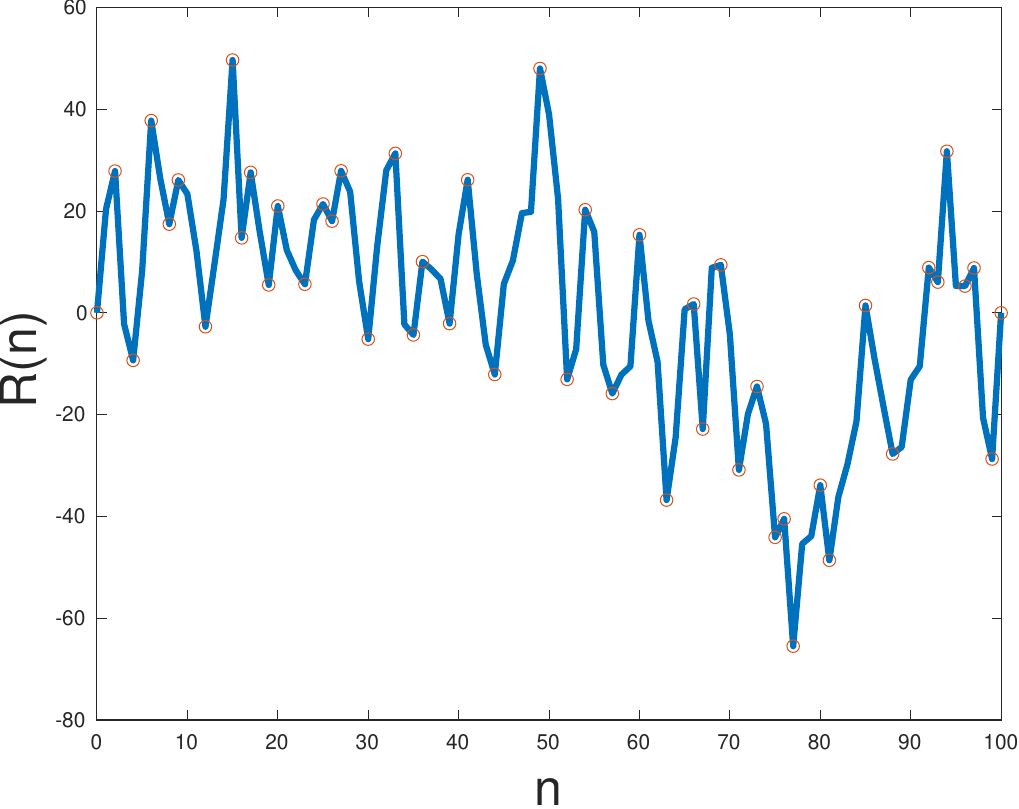} 
   \caption{Random sequence $R$ with local extrema marked with red circles. }
   \vspace{-5pt}
   \label{fig:extreme}
\end{figure}
%From \eqref{eqn:newcompactevo} it follows that 
%\begin{eqnarray}
%x(n) & = & \alpha^{m}x(n-m)+ \nonumber \\
%&  & +\sum_{i=0}^{m-1} \left(-r(n-i) +g(n-i)-l(n-i) \right)\alpha^i, \nonumber \\
%& & ~m=0,1,\ldots,n.
%\end{eqnarray}

It turns out that $\AvGas(B,B/\Delta)$ is  determined by the runs of the same sign in the sequence $r(n)$ or equivalently by the monotone subsequences of $R=(R(0),R(1),\ldots,R(N))$, where $R(n)$ is defined in \eqref{eqn:cumexdemand} for $n=1,2,\ldots,N$ and we have defined $R(0)=0$. We regard $R(0)$ and $R(N)$ as local extreme (local maximum or minimum) points, and mark out the intervening extreme points by their indices, using $m_i$ to indicate the position of a local maximum, and $n_i$ to indicate the position of a local minimum. Thus if $R(0)=0$ is larger than the next extreme point and $R(N)$ is larger than its earlier extreme point, we obtain indices $m_0=0 < n_1 < m_1< n_2< m_2< \ldots < n_K<m_K=N$. On the other hand if $R(0)$ is smaller than the next extreme point, and $R(N)$ is larger than its earlier extreme point, we obtain $n_1=0 < m_1 < n_2 < m_2 < \ldots < n_K < m_K=N$.  The two other cases are  $m_0=0, n_K=N$ and $n_1=0,n_K=N$. Note that in all four cases, the position of the local minimum $n_i$ is smaller than $m_i$. Fig.~\ref{fig:extreme} illustrates the extreme points for an example sequence. The subsequence $r(i)$,  $i\in\{n_i+1\ldots,m_i\}$ is called a positive run of the sequence $r$, and the subsequence with indices  $i\in\{m_i+1,\ldots,n_{i+1}\}$ is called a negative run of the sequence $r$.  
\begin{figure}[htbp] %  figure placement: here, top, bottom, or page
   \centering
   \includegraphics[width=2.5in]{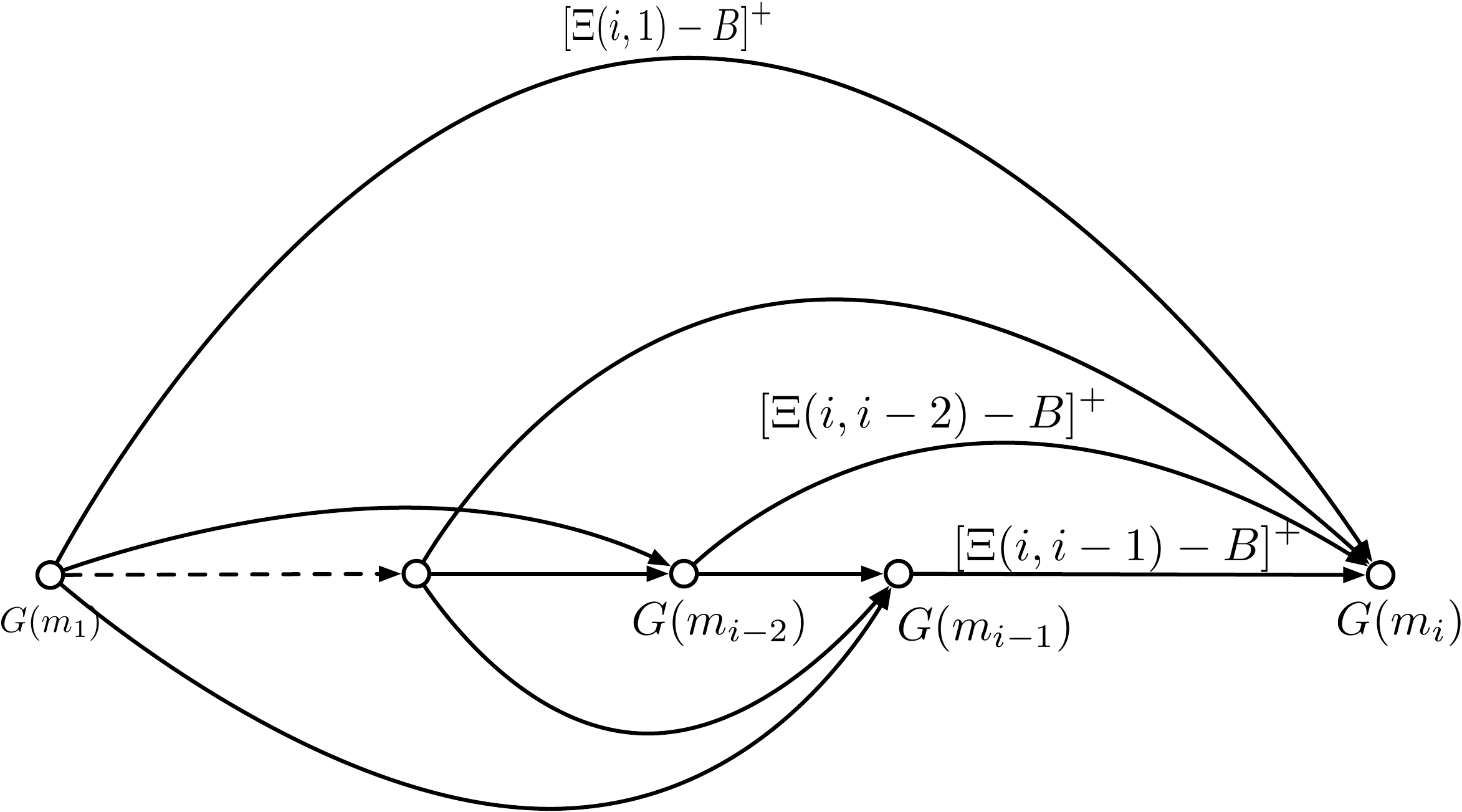} 
   \caption{Graph illustrating computation of $G(N)$. Node $i$ is labeled  $G(m_i)$. An edge originating at  node $i$, presents the sum of the edge label, $\max[\Xi(i,j)-B,0]$, and the label for node $j$, to its destination, node $i$. Node $i$ calculates the maximum value over all incoming edges.}
   \label{fig:compgraph}
\end{figure}
In order to calculate a lower bound for the power alignment function, $\AvGas(B,P)$,  we will work with the cumulative peaker plant energy 
$
G(n)=\sum_{i=1}^n g(i),~n=1,2,\ldots,N.
%\label{eqn:cumpeaker}
$
\begin{figure}[htbp] %  figure placement: here, top, bottom, or page
   \centering
   \includegraphics[width=3in]{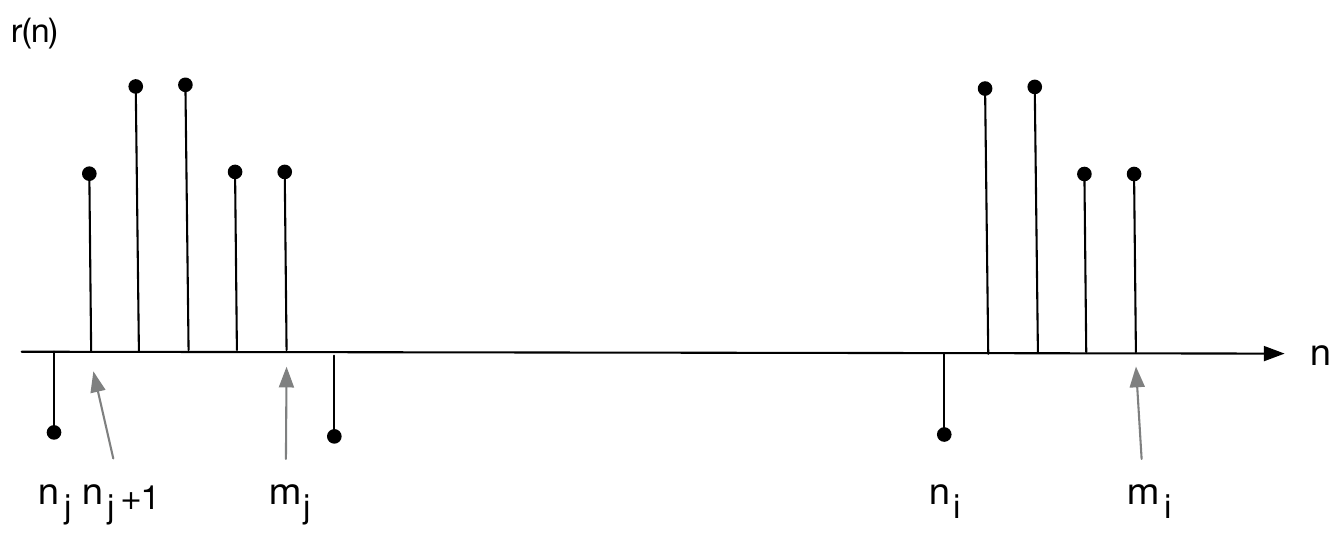} 
   \caption{An example sequences $r(n)$. Location of  local minima, $n_j$'s and local maxima $m_i$'s of corresponding cumulative sequence $R(n)$ are marked. }
   \label{fig:example}
\end{figure}
Consider any local maximum of $R$, say at time-index $m_i$ and consider $G(m_i)$, the total peaker plant energy required up to  time $m_i$. Define 
\begin{equation}
\Xi(i,j)=R(m_i)-R(n_j),~\mbox{for}~j\leq i,
\label{eqn:Xidef}
\end{equation}
which is the total excess demand over wind supply between the extreme points   $n_j$ and $m_i$. Then for any $j \leq i$
\begin{eqnarray}
G(m_i)-G(m_{j-1}) & = & \sum_{l=m_{j-1}+1}^{m_i}g(l) \nonumber \\
&  \geq & \sum_{l=n_j+1}^{m_i} g(l) \nonumber \\
& = & \sum_{l=n_j+1}^{m_i} r(l)-(-e^t\mathbf{A}\mathbf{x})
\end{eqnarray}
where $\mathbf{e}^t=(0^{n_j}1^{m_i-n_j}0^{N-m_i})$. From \eqref{eqn:lbgen} it follows that
\begin{eqnarray}
e^t\mathbf{A}\mathbf{x} & = &  (x(n_j+1)-\alpha x(n_j)) + \nonumber \\
& & + (x(n_j+2)-\alpha x(n_j+1)) +\ldots   \nonumber \\
& &  \ldots +  (x(m_i)-\alpha x(m_i-1)).
\end{eqnarray}

%We now consider different versions of the optimization problem for $\mathbf{e}^t=(0^{n_j}1^{m_i-n_j}0^{N-m_i})$. 

%\subsubsection{$\alpha=1$, and the power constraint is inactive}
%\label{sec:nlpinac}
We now consider the special case, $\alpha=1$, and the power constraint is inactive.
In this case $\mathbf{e}^t\mathbf{A}\mathbf{x}=x(m_i)-x(n_j)\geq 0-B$ and we obtain the bound
\begin{eqnarray}
G(m_i)-G(m_{j-1}) \geq  \sum_{l=n_j+1}^{m_i} r(l)-B.
\end{eqnarray}
Since $G(m_i)-G(m_{j-1}) \geq 0$, this can be strengthened to obtain
\begin{eqnarray}
G(m_i)-G(m_{j-1}) \geq  \relu{\sum_{l=n_j+1}^{m_i} r(l)-B}.
\end{eqnarray}
Based on this we obtain the lower bound on $G$:
\begin{equation}
G(m_i)=\max_{j<i} \left\{ G(m_j)+\relu{\Xi(i,j)-B}\right\}.
\end{equation}
This lower bound on the peaker plant power is shown to be tight (i.e. it coincides with the true value) in App.~\ref{sec:app1}.

%\subsubsection{$\alpha \leq 1$ and the power constraint is inactive}
%Over any positive run of the sequence $r$, say $(r(n_j+1),r(n_j+2),\ldots,r(m_i))$, define the storage credit for $C(j,i)=-e^t\mathbf{A}\mathbf{x}$. It is shown in App.~\ref{sec:app2} that
%\begin{equation}
%C(i,j)=-e^t\mathbf{A}\mathbf{x}\geq r(n_j+1)+r(n_j+2)+\ldots+r(n_j+L)
%\end{equation}
%where $L\leq m_i-n_j$ is chosen such that 
%\begin{eqnarray}
%\sum_{l=1}^{L} r(a+l)/\alpha^l \leq x(a) < \sum_{l=1}^{L+1}r(a+l)/\alpha^l.
%\end{eqnarray}
%Since $G(m_i)-G(m_{j-1}) \geq 0$ we obtain
%\begin{equation}
%G(m_i) \geq G(m_{j-1}) + [\Xi(i,j)-\alpha B+\beta]^+,
%\end{equation}
%where $[x]^+$ is the larger of $x$ and $0$. Note that in the above expression $\beta$ depends on $m_i$ and $n_j$, not $m_i$ and $m_{j-1}$.

We obtain a lower bound for $\PowAl(B,P)$ by recursively computing
\begin{equation}
G(m_i)=\max\{[\Xi(i,j)-\alpha B + \beta]^+ +G(m_{j-1})\}
\label{eqn:condition}
\end{equation}
where the maximum is evaluated over all $j \leq i$.
The computation is illustrated in the diagram shown in Fig.~\ref{fig:compgraph}. Clearly, the recursion \eqref{eqn:condition} can be  solved using dynamic programming. {However, faster methods, based on the separating out negative runs in $r$ which exceed $B$ in absolute sum, can be readily found.}

Equation~\eqref{eqn:condition} and Fig.~\ref{fig:compgraph} explain why the incremental capacity decreases with $B$. Since only positive runs in $r(n)$ which are larger than $B$ contribute to the peaker plant power $\AvGas(B,B/\Delta)$, the incremental capacity  $-d\AvGas/dB$ is proportional  to the number of edges that lie on  the maximum cost path from node $m_0=0$ to $m_K=N$, the last local maximum in the sequence $R$.  As $B$ is increased, an edge will disappear from the maximum cost path because $\Xi(i,j)-B$ is no longer positive. This leads to a reduction in the number of edges on the maximal path from $m_0$ to $m_K$ and a decrease in the incremental capacity. {In other words, we have argued that the power alignment function is convex}.

This theory also reveals that any statistical model that is to be used for estimating BESS capacity must capture this run structure faithfully.
\subsection{Behavior of the Greedy Charging Protocol}
We study the behavior of the greedy charging protocol for runs in the excess demand sequence $\{r(n)\}$. We expect that when there is a positive run between time instants $a$ and $b$  i.e. $r(l)\geq 0$ for $a < l \leq b$, the loss is zero, i.e.   $l(n)=0$ for $a< n \leq b$. On the other hand, when there is a negative run $r(l) < 0$, $a < l \leq b$, the gas is zero, i.e.  $g(n)=0$ for $a < l \leq b$. This is immediately clear from \eqref{eqn:gas} and \eqref{eqn:loss}.
\subsection{Endpoints of the Power Alignment Function}
The following are some theoretical results. (i) The value of $\AvGas(0,0)$ is obtained in  \eqref{eqn:g00} in terms of the wind and demand power time series. This is the average peaker plant power required when there is no energy storage. (ii) The value of $B^\#$, the battery energy rating required to reduce the peaker plant power to $0$ is calculated in \eqref{eqn:necc}, when there is no energy loss in the battery ($\alpha=1$) and when the battery power constraint is inactive ($\Delta P \geq B$). (iii) The key characteristic of the wind and demand power series that characterizes the behavior of $\AvGas(B,P)$ is its run structure, as seen in Sec.~\ref{sec:theory-b} (inactive power constraint and $\alpha=1$). (iv) The minimum peaker plant power obtained by the LP solution coincides with that obtained by the greedy battery charging solution given by \eqref{eqn:recursion1} and \eqref{eqn:gas}, even though the battery states obtained by the two solutions may not coincide. 
%Space constraints preclude us from presenting details, which are available in \cite{LongVersion:2024}.

%\noindent
%{\bf Equal Average (EA) Condition}: The averages of $w$ and $d$ over $N$ samples are equal, i.e. $$\sum_{n=1}^N(w(n)-d(n))=0.$$

%We start by obtaining the values of $\bar{G}(0,0,N)$ and $\bar{L}(0,0,N)$.
First we obtain the average peaker power $\AvGas(0,0)$, i.e. the peaker power when the WF has no battery storage.
\begin{theorem} The average and peak peaker power with no battery storage ($B=0$, $P=0$) are given by
 \begin{eqnarray}
 \AvGas(0,0) & = & \frac{R^+(N)}{N\Delta} \\
 \label{eqn:g00}
 \PeakGas(0,0) & = & \frac{\max_{n=1,2,\ldots,N}(d(n)-w(n))}{\Delta}
 \label{eqn:gp00}
 \end{eqnarray}
    \label{thm:1}
\end{theorem}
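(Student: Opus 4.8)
The plan is to exploit the fact that setting $B=0$ removes all optimization freedom, so that the minimal peaker plant power can be read off term-by-term from the energy conservation law rather than through the full linear program. First I would observe that the stored-energy constraint $0\le x(n)\le B$ collapses to $x(n)=0$ for every $n$ when $B=0$; with no battery the power constraint $|x(n)-\alpha x(n-1)|\le \Delta P$ is trivially satisfied and plays no role, regardless of $P$. Substituting $x(n)=0$ into the conservation equation~\eqref{eqn:conservation} forces $h(n)=r(n)=d(n)-w(n)$ for each $n$, independent of the loss factor $\alpha$. This is the crux: Optimization Problem~\ref{op:op1} degenerates to a problem whose feasible set for $x$ is a single point.

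The second step is to recover $g(n)$ from the now-determined $h(n)$. Since the paper establishes $g(n)=\relu{h(n)}$, and more importantly since $\relu{h(n)}$ is the smallest non-negative value consistent with $g(n)-l(n)=h(n)$ and $l(n)\ge 0$ (because $g(n)\ge 0$ and $g(n)=h(n)+l(n)\ge h(n)$ give $g(n)\ge\relu{h(n)}$ pointwise), minimizing the peaker energy at each index simultaneously minimizes both the sum and the maximum. The common optimum is therefore attained term-by-term with $g(n)=\relu{r(n)}=\relu{d(n)-w(n)}$. I would then sum: $\sum_{i=1}^N g(i)=\sum_{i=1}^N\relu{d(i)-w(i)}=R^+(N)$ by definition of $R^+$, and dividing by $N\Delta$ yields $\AvGas(0,0)=R^+(N)/(N\Delta)$.

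For the peak claim I would invoke monotonicity of $\relu{\cdot}$ to pull the maximum through: $\max_n g(n)=\max_n\relu{d(n)-w(n)}=\relu{\max_n(d(n)-w(n))}$, so that dividing by $\Delta$ recovers $\PeakGas(0,0)$. The one caveat worth flagging is the outer $\relu{\cdot}$, which the stated formula drops; this is justified precisely when the largest per-sample excess demand is non-negative, i.e. the meaningful regime, since otherwise demand never exceeds wind and no peaker power is ever required.

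There is essentially no hard step here, as the theorem is a boundary case in which the optimization has collapsed to a pointwise minimization. The closest thing to a subtlety is justifying that $g(n)=\relu{r(n)}$ is genuinely \emph{optimal} rather than merely \emph{feasible}, and that the per-index minimizers can all be chosen at once without conflict; both follow from the pointwise lower bound $g(n)\ge\relu{h(n)}$ together with the fact that $h$ is already pinned down by $x\equiv 0$, so no coupling across time indices remains.
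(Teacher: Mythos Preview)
Your proposal is correct and follows essentially the same approach as the paper: both set $x(n)\equiv 0$ (forced by $B=0$) and read off $g(n)=\relu{d(n)-w(n)}$ from the state equations, though you argue via the conservation law~\eqref{eqn:conservation} and the pointwise bound $g(n)\ge\relu{h(n)}$ while the paper simply cites the greedy formula~\eqref{eqn:gas}. Your treatment is in fact more careful about justifying optimality and about the implicit non-negativity assumption in the peak formula.
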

\begin{proof}
This follows from the observation that when $B=0$ and $P=0$, peaker plant power must be used to make up the excess demand and follows directly from \eqref{eqn:gas}, by setting $x(i)=0$, $i=1,2,\ldots,N$. 
\end{proof}
\begin{remark} 
For the case $B=P=0$ it follows from conservation of energy that $\AvLoss=\AvGas-\AvExDem$. Further, if the average demand power equals the average wind power (hereafter the EA Condition), then $\AvLoss=\AvGas$.
\end{remark}

%\begin{remark}
%From the previous remark, $\AvExDem=\AvGas-\AvLoss$
%\end{remark}
%\begin{definition}
%The \emph{zero-storage point} is the point on the tradeoff curve with coordinates $(0,\max_nR^+(n))$ and the \emph{zero-storage energy} is $\max_nR^+(n)$.
%\end{definition}
%We now study the value of $B_{max}$ and $P$ for which $\bar{G}$ is the smallest possible. 

%\begin{definition}
%We will refer to $B^\#$ as the \emph{zero-gas storage} size and to $(B^\#,0)$ as the \emph{zero-gas point}.
%\end{definition}
%When $\alpha=1$, the EA condition holds, and the power constraint is inactive it is possible to calculate the minimum required BESS energy capacity. More precisely,
%we seek  the smallest value of $B$ for which $\AvGas(B,B/\Delta)=0$. 
We now determine the BESS size required so that $\AvGas=0$, when there is no loss in the energy storage system ($\alpha=1$) and the power constraint is inactive ($P=B/\Delta$).
\begin{theorem}  
\label{thm:two} 
Let the EA condition hold and let $\alpha=1$. 
(Necessity) If
$\AvGas(B,B/\Delta)=0$  then
\begin{equation}
B \geq  \max_n R(n)-\min_n R(n).
\label{eqn:necc}
\end{equation}
(Sufficiency) Further, there is a value of $x(0)$ such that if $B =  \max_n R(n)-\min_n R(n)$ then $\AvGas(B,B/\Delta)=0$.
\label{thm:2}
\end{theorem}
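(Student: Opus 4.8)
The plan is to exploit the simplification afforded by the two standing hypotheses, $\alpha=1$ and $P=B/\Delta$ (so that $\Delta P = B$ and the power constraint $|x(n)-x(n-1)|\le \Delta P$ is automatically implied by $0\le x(n),x(n-1)\le B$). Under these, the conservation law \eqref{eqn:conservation} telescopes to $x(n)=x(0)-R(n)+H(n)$, where $H(n)=\sum_{i=1}^n h(i)=G(n)-L(n)$ and $L(n)=\sum_{i=1}^n l(i)$ is nondecreasing with $L(0)=0$. Since every $g(i)\ge 0$, the hypothesis $\AvGas(B,B/\Delta)=0$ is equivalent to $g(i)=0$ for all $i$, whence $G\equiv 0$ and $x(n)=x(0)-R(n)-L(n)$. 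The whole argument then reduces to controlling the nuisance term $L$.

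For \textbf{sufficiency} I would exhibit an explicit feasible schedule. Put $B=\max_n R(n)-\min_n R(n)$ and take no curtailment, $l\equiv 0$, so that $x(n)=x(0)-R(n)$. The requirement $0\le x(n)\le B$ forces $x(0)\ge \max_n R(n)$ and $x(0)\le B+\min_n R(n)=\max_n R(n)$ simultaneously, so the single choice $x(0)=\max_n R(n)$ works and yields $x(n)=\max_n R(n)-R(n)\in[0,B]$. One then checks this trajectory is consistent with $g\equiv 0$: from $h(n)=x(n)-x(n-1)+r(n)=-(R(n)-R(n-1))+r(n)=0$ we get $g(n)=\relu{h(n)}=0$ and $l(n)=\relu{h(n)}-h(n)=0$, so demand is met at all times with zero peaker power. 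The power constraint is also inactive, since $|x(n)-x(n-1)|=|r(n)|=|R(n)-R(n-1)|\le \max_n R(n)-\min_n R(n)=\Delta P$, the last inequality holding because consecutive values of $R$ both lie in $[\min R,\max R]$.

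For \textbf{necessity} I would start from $x(n)=x(0)-R(n)-L(n)$ and compare the battery level at the global maximizer $m^\ast$ of $R$ and the global minimizer $n^\ast$. A direct subtraction gives $x(n^\ast)-x(m^\ast)=\big(\max_n R(n)-\min_n R(n)\big)+\big(L(m^\ast)-L(n^\ast)\big)$; bounding the left side by $B$ via $x(n^\ast)\le B$ and $x(m^\ast)\ge 0$ yields $B\ge \max R-\min R$ provided $L(m^\ast)-L(n^\ast)\ge 0$, which holds whenever $n^\ast\le m^\ast$. I would close the remaining case by invoking steady-state (periodic) operation $x(N)=x(0)$, the physically relevant regime for a capacity credit: summing \eqref{eqn:conservation} and using the EA condition $R(N)=0$ forces $H(N)=0$, hence $L(N)=G(N)=0$; since $L$ is nondecreasing from $L(0)=0$ to $L(N)=0$ it vanishes identically, so $l\equiv 0$ and $x(n)=x(0)-R(n)$. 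The range constraint $0\le x(0)-R(n)\le B$ then gives $x(0)\ge \max_n R(n)$ and $x(0)-\min_n R(n)\le B$, and eliminating $x(0)$ yields $B\ge \max_n R(n)-\min_n R(n)$.

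The step I expect to be the \emph{main obstacle} is precisely the treatment of the curtailment variable $l$ in the necessity direction: a naive swing argument silently assumes no wind is ever shed, but a feasible zero-peaker schedule is free to curtail, which can meet demand with a strictly smaller $B$ when the global minimum of $R$ follows its global maximum. The clean resolution is to show that under the EA condition periodic operation admits no loss; alternatively, one may replace $\max_n R(n)-\min_n R(n)$ by the order-respecting quantity $\max_{j\le i}\big(R(m_i)-R(n_j)\big)$ coming from the tight graph bound \eqref{eqn:condition}, which coincides with $\max R-\min R$ exactly when the global minimum of $R$ precedes its global maximum. I would state this periodicity/ordering hypothesis explicitly rather than let it hide inside the telescoping.
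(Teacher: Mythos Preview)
Your sufficiency argument is the same as the paper's: both set $x(0)=\max_n R(n)$, take $g\equiv l\equiv 0$, and verify $x(n)=\max_n R(n)-R(n)\in[0,B]$.

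For necessity, you are actually more careful than the paper. The paper's proof simply asserts ``$\AvGas=0$ (and hence $\AvLoss=0$ because of EA)'' and then runs the recursion $x(n)=x(n-1)-r(n)$. But, as you correctly isolate, summing the conservation law under $\alpha=1$, EA, and $G(N)=0$ gives only $L(N)=x(0)-x(N)$, which need not vanish. Your two-pronged treatment---the direct swing bound when $n^\ast\le m^\ast$, and the periodicity hypothesis $x(N)=x(0)$ to force $L\equiv 0$ otherwise---closes exactly the gap the paper leaves open. Indeed, the paper's own Remark immediately following the theorem concedes that the smallest $B$ yielding $\sum_n g(n)=0$ is $B_g^\#=\max_{0\le m\le n\le N}\{R(n)-R(m)\}$, which can be strictly smaller than $B^\#=\max_n R(n)-\min_n R(n)$ (their example has $B_g^\#=6<8=B^\#$). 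So the necessity statement as written is not true without an additional hypothesis such as your periodic operation, and the correct unconditioned bound is the order-respecting one you identify. Your diagnosis of the ``main obstacle'' is precisely the weak point of the paper's argument.
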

\begin{proof}
Since $\AvGas=0$ (and hence $\AvLoss=0$ because of EA), it follows that $f(n)$ given by \eqref{eqn:recursion1} must satisfy $0 \leq f(n) \leq B$ and thus $x(n)=f(n)$, $n=1,2,\ldots$.
This means that $x(n)$ obeys the recursion
\begin{equation}
x(n)=x(n-1)+(w(n)-d(n)),~n=1,2,\ldots,N,
\label{eqn:arecurse}
\end{equation}
and 
\begin{equation}
x(0)-x(n)=-W(n)+D(n)=R(n),~n=1,2,\ldots,N.
\label{eqn:battstateev}
\end{equation}
Thus $x(0)-B \leq R(n) \leq x(0)$ from which it follows that 
$\max_n R(n)-\min_n R(n) \leq B$. This proves necessity.
%
%But $-x(0) \leq x(n)-x(0) \leq B-x(0)$ and it follows that
%\begin{eqnarray}
%\max_n -R(n) &  \leq  & B-x(0) \nonumber \\
%-\min_n -R(n) & \leq & x(0).
%\end{eqnarray}
%Combining these two inequalities proves necessity.

In order to prove sufficiency, we need to exhibit a suitable initial condition $x(0)$ such that $x(n)=x(0)-R(n)$ remains in the range $0 \leq x(n) \leq B$, $n=1,2,\ldots,N$. Set $x(0)=\max_n R(n)$. First, since $\max_n R(n) \geq 0$ and $\min_n R(n) \leq 0$ (i.e., due to the EA), $x(0)$ satisfies $0\leq x(0) \leq B$, so this value of $x(0)$ is feasible. Thus $x(n)=\max_n R(n)-R(n)$ satisfies $0\leq x(n) \leq B$, for $n=1,2,\ldots,N$. This proves sufficiency.
\end{proof}
We define $B^\#=\max_n R(n)-\min_n R(n)$.
\begin{remark} ($\alpha=1$ and the power constraint is inactive)
$B^\#$ is the smallest battery energy capacity for which $\sum_n g(n)=0$ and $\sum_n l(n)=0$. However,  the smallest battery energy capacity, $B_g^\#$,  for which only  $\sum_n g(n)=0$, is given by 
$$B^\#_g=\max_{0\leq m\leq n\leq N} \{R(n)-R(m) \}.$$
The proof of this fact follows by choosing the smallest $B$ such that $[\Xi(i,j)-B]^+=0$ for all $i$ and $j\leq i$ in the graph illustrated in Fig.~\ref{fig:compgraph}. In general $B^\#_g \leq B^\#$. As an example consider $\bm{r}=(1^4(-1)^81^5(-1)^51^6(-1)^2)$. In this case $B^\#=8$ and $B^\#_g=6$.
\end{remark}
%\begin{remark}
 
%Setting the battery capacity $B=B^\#$ will work provided the initial state of the battery is chosen properly. As a simple example if the battery is initially discharged, and the initial demand exceeds the energy supplied by the wind, $\bar{G}$ will be greater than zero, no matter how large $B$ is. This theorem shows that a suitable value of $x(0)$ exists such that equality  in \eqref{eqn:necc} is both necessary and sufficient. However, this value of  $x(0)$ cannot be computed in practice because it requires us to look ahead in time at $w(n)$ and $d(n)$. It is not hard to see that  setting $B=2B^\#$ and $x(0)=B^\#$ allows us to get around the feasibility problem by compromising on the size of the BES, assuming that $\max_n R(n)-\min_n R(n)$ can be estimated.
%\end{remark}

\section{Summary and Conclusions}
\label{sec:summary}
A time-series approach is proposed for defining and estimating the capacity and incremental capacity of a BESS captive to a wind farm. The capacity and incremental capacity are obtained from the power alignment function $\AvGas(B,P)$ defined in this paper. A method for computing the  power alignment function is proposed and estimated based on NYSERDA and NYISO data.  Theoretical results are obtained, and a key characteristic of the excess demand for BESS sizing  is identified. Our ongoing and further work is on completing the theoretical calculations and building appropriate statistical models for wind and demand sequences.

\appendix
\subsection{Achievability of the LP optimal solution by the Greedy Charging Protocol}
\label{sec:app1}
Here we consider the greedy charging protocol \eqref{eqn:funditerate} and show that it achieves the lower bound \eqref{eqn:condition} for $\AvGas(B,P)$ when $\alpha=1$ and $P\Delta=B$, i.e. when there is no loss and the power constraint is inactive. 
Our approach is to show that $G(m_i)=\relu{\Xi(i,j)-B}+G(m_j)$ for some $j \leq i$. %and we use the fact that for the greedy charging protocol $G(n_j)=G(m_{j-1})$ which is to say that the greedy charging protocol does not use any peaker plant power on an n-run of $R$. 
Towards this end, observe that the battery state $x(n)$ must equal $B$ for some $n=n_j< m_i$. 
%If not we can always reduce $B$ without affecting the iterations of the greedy charging protocol.
If not, we can always increase the initial state $x(0)$ to  satisfy this condition. 
Let $n_j$ be the largest  index for which  $x(n_j)=B$ and $j \leq i$. Then from the battery state evolution equations~\eqref{eqn:funditerate}--\eqref{eqn:loss} we get
\begin{eqnarray}
x(m_i) & = & x(n_j)-\sum_{i=n_j+1}^{m_i}r(l)+ \sum_{i=n_j+1}^{m_i}h(l)\nonumber \\
& \stackrel{(a)}{=} & B-\Xi(i,j)+\sum_{l=n_j+1}^{m_i}g(l),
\label{eqn:initialfinal}
\end{eqnarray}
where $(a)$ holds because $l(m)=0$, $m=n_j+1,\ldots,m_i$, since, by assumption, the battery state did not reach $B$ at any time index between $n_j+1$ and $m_i$. If $g(m_i)>0$, then $x(m_i)=0$ and 
$\sum_{l=n_j+1}^{m_i}g(l)=\Xi(i,j)-B>0$. Thus $G(m_i)=\relu{\Xi(i,j)-B} +G(n_j)=\relu{\Xi(i,j)-B}+G(m_{j-1})$. If $g(m_i)=0$, then two possibilities exist: \emph{either} $g(l)=0$ for all $l$ between $n_j$ and $m_i$, in which case $\sum_{l=n_j+1}^{m_i}g(l)=0$. From \eqref{eqn:initialfinal} it follows that $\Xi(i,j)-B=-x(m_i) \leq 0$, and thus $G(m_i)=\relu{\Xi(i,j)-B} +G(m_{j-1})$ \emph{or} or there is a largest $m_k$, $n_j< m_k < m_i$ such that $g(m_k)>0$ which means $\sum_{l=m_k+1}^{m_i}g(l)=0$. Thus  $\sum_{l=n_j+1}^{m_i}g(l)=\sum_{l=n_j+1}^{m_k}g(l)$ which means $G(m_i)=\relu{\Xi(i,k)-B} +G(m_k)$ because $\Xi(i,k)-B < 0$.

%\subsection{Lower Bound for $\alpha < 1$, Power constraint inactive}
%\label{sec:app2}
%As seen earlier, for $\mathbf{e}^t=(0^{n_j}1^{m_i-n_j}0^{N-m_i})$, 
%\begin{eqnarray}
%\lefteqn{G(m_i)-G(m_{j-1}) =} & &\nonumber \\
%& &\sum_{l=n_j+1}^{m_i} r(l)-(-e^t\mathbf{A}\mathbf{x})
%\end{eqnarray}
%with
%\begin{eqnarray}
%e^t\mathbf{A}\mathbf{x} & = &  (x(n_j+1)-\alpha x(n_j)) + \nonumber \\
%& & + (x(n_j+2)-\alpha x(n_j+1)) +\ldots   \nonumber \\
%& &  \ldots +  (x(m_i)-\alpha x(m_i-1)).
%\end{eqnarray}

%\vspace{-5pt}
\bibliographystyle{IEEEtran}
\bibliography{reference}
\end{document}